\begin{document}


\setcounter{page}{1}
\publyear{24}
\papernumber{2170}
\volume{191}
\issue{1}

\finalVersionForARXIV


\title{Link Residual Closeness of Harary Graphs}

\author{Chavdar Dangalchev\thanks{Address for correspondence:  Bulgarian Academy of Sciences,
                           Institute of Mathematics and Informatics, Bulgaria.  \newline \newline
                    \vspace*{-6mm}{\scriptsize{Received May 2023; \ accepted December 2023.}}}
\\
 Bulgarian Academy of Sciences\\
 Institute of Mathematics and Informatics, Bulgaria\\
  dangalchev@hotmail.com}

\maketitle

\runninghead{Ch. Dangalchev}{Link Residual Closeness of Harary Graphs}

\begin{abstract}
The study of networks characteristics is an important subject in different fields, like
math, chemistry, transportation, social network analysis etc.
The residual closeness is one of the most sensitive measure
of graphs' vulnerability. In this article we calculate
the link residual closeness of Harary graphs.
\end{abstract}
\begin{keywords}
 Closeness,  Residual Closeness,  Harary Graphs.
\end{keywords}

\section{Introduction}

One important characteristic of networks is their robustness, studied in many different
fields of the science.
One of the most sensitive measures of network's vulnerability is residual closeness,
introduced in  [1] - Dangalchev
proposed to measure the closeness of a graph after removing a vertex or a link (edge).
The  definition for the closeness of a simple undirected graph, introduced in [1], is:
\[
C(G)=\sum\limits_i  \sum\limits_{j\ne i} 2^{-d(i,j)}.
\]
In the above formula,  $d(i,j)$  is the standard distance between vertices $i$  and $j$.
The advantages of the above definition are that it can be used
 for not connected graphs and it is convenient for creating formulae for graph operations.

\medskip
Let $r$   and $s$  be a pair of connected vertices of graph  $G$ and graph  $G_{r,s}$  be the graph, constructed by removing link $(r,s)$ from graph  $G$.
Let  $d' (i,j)$  be the distance between vertices
$i$   and $j$  in  graph  $G_{r,s}$.
Using the above formula, with distances $d'(i,j)$ instead of $d(i,j)$, we can calculate the closeness of graph $G_{r,s}$.
The link residual closeness R is defined in [1] as:
\[
R(G)={\min_{r,s} \{ C(G_{r,s}) \}  }.
\]
If we remove a vertex, instead of a link, we can define vertex residual closeness.
The vertex residual closeness is more important for the social network analysis, while the link residual closeness is studied in transportation, utility networks, etc.
In this article we will
consider only the link residual closeness.
To find the difference between the closeness and the residual closeness we have to compare  distances $d(i,j)$ and $d'(i,j)$.

\medskip
Harary graphs are introduced in [2] by F. Harary as
 graphs that are $k$-connected, having $n$ vertices with the least number of edges.
The notation $H_{k,n}$ for  Harary graphs,
where $2 \le k < n$ is used in West [3].
A simple construction of  Harary graphs is:
Let us place $n$ vertices
in a circle and name them $1,2,3,...,n$.
In case of $k=2p$ even, every vertex is connected to
nearest $p$ vertices in each direction.
In case of $k=2p+1$ odd and $n=2q$ even, $H_{k,n}$
is created by connecting every vertex to the nearest $p$ vertices in each direction
and to the diametrically opposite vertex (adding links $(i,i+q)$).
In these two cases there is an automorphism
between any two vertices.
In case of $k=2p+1$ odd and $n=2q+1$ odd, the Harary graph is created by
connecting every vertex to the nearest $p$ vertices in each direction
and for vertices $i \in [1,q+1]$ are added links $(i,i+q)$.
This way every vertex is connected to $k=2p+1$ other vertices, except for vertex $q+1$, which is connected to $2p+2$ vertices: in addition to the
$2p$ links to the neighbors, there are 2 more links -  $(1,q+1)$ and  $(q+1,2q+1)$.

The relative impact of a failure of a link can be seen in normalized residual closeness
NR ([1]) of graph $G$: $NR(G) = (C(G)-R(G)) \symbol{92} C(G)$.
In this article we will calculate the difference between the closeness and
the link residual closeness of Harary graphs.
The closeness and the vertex residual closeness of some Harary graphs are given in [4].
We can determine the link residual closeness using the results of this article and the closeness from [4]. Throughout this article we will use the term “residual closeness” instead of “link residual closeness”.
More information on closeness, residual closeness, and additional closeness can be found in [5-25].

\section{Residual closeness of $H_{2,n}$}

Graph $H_{2,n}$ is cycle graph $C_n$.
After deleting any link of $H_{2,n}$ we receive path graph $P_n$.
Using formulae for closenesses of cycle graphs (given in [4]) and path graphs (in [1]) we can prove:
\begin{theorem}
The residual closeness of Harary graph $H_{2,n}$ is:
\[
R(H_{2,2k}) = C(H_{2,2k}) - 4 + 2^ {2-2k} + 3k2^ {1-k},
\]
\[
R(H_{2,2k+1}) = C(H_{2,2k+1}) - 4 + 2^ {1-2k} + (2k+1)2^ {1-k}.
\]
\end{theorem}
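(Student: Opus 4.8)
The plan is to exploit the extreme symmetry of the cycle to collapse the minimization in the definition of $R$, and then reduce everything to two elementary geometric-series computations.

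First I would observe that $H_{2,n}$ is the cycle $C_n$, which is edge-transitive: any two links are carried onto each other by a rotation of the labelled circle. Hence for every link $(r,s)$ the graph $(H_{2,n})_{r,s}$ is isomorphic to the path $P_n$, so all the quantities $C((H_{2,n})_{r,s})$ coincide. The minimum defining $R(H_{2,n})$ is therefore attained immediately, giving the clean reduction
\[
R(H_{2,n}) = C(P_n).
\]
This disposes of the only genuinely combinatorial difficulty and turns the theorem into an identity between closed forms.

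Next I would compute $C(P_n)$ directly. Labelling the path vertices $1,\ldots,n$ we have $d(i,j)=|i-j|$, so the inner sum at vertex $i$ splits into a left and a right geometric tail, $\sum_{d=1}^{i-1}2^{-d}+\sum_{d=1}^{n-i}2^{-d}=2-2^{1-i}-2^{-(n-i)}$. Summing over $i$ and re-indexing each of the two resulting geometric sums yields
\[
C(P_n) = 2n - 4 + 2^{2-n},
\]
and alternatively one may simply quote the path formula from [1].

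Finally I would recast this in terms of $C(H_{2,n})=C(C_n)$, treating the even and odd cases separately. The one point demanding care is the multiplicity of distances on the cycle: in $C_{2k}$ each distance $1,\ldots,k-1$ occurs twice from a fixed vertex while the antipodal distance $k$ occurs once, whereas in $C_{2k+1}$ each distance $1,\ldots,k$ occurs exactly twice. Carrying out the corresponding single-vertex geometric sums and multiplying by $n$ gives $C(C_{2k})=4k-3k\,2^{1-k}$ and $C(C_{2k+1})=4k+2-(2k+1)2^{1-k}$; subtracting $C(P_n)$ then reproduces exactly the two stated expressions for $R$. The main obstacle is thus not conceptual but bookkeeping — keeping the antipodal term and the parity-dependent index ranges straight so that the differences $C(C_n)-C(P_n)$ telescope into the advertised closed forms.
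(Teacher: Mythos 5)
Your proposal is correct and follows essentially the same route as the paper: observe that deleting any edge of the edge-transitive cycle $H_{2,n}=C_n$ yields $P_n$, so $R(H_{2,n})=C(P_n)$, and then subtract the path closeness from the cycle closeness in the even and odd cases. The only difference is that you derive the closed forms $C(P_n)=2n-4+2^{2-n}$, $C(C_{2k})=4k-6k\,2^{-k}$ and $C(C_{2k+1})=2(2k+1)-2(2k+1)2^{-k}$ by direct geometric-series computation, whereas the paper simply quotes them from [1] and [4]; your resulting expressions agree with the cited ones, so the argument is sound.
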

\begin{proof}
The formulae  for closeness of cycle graphs, given in [4] are:
\[
C(C_{2k}) = 4k - 6k2^ {-k},
\]
\[
C(C_{2k+1}) = 2(2k+1) - 2(2k+1)2^ {-k}.
\]
The formula for closeness of path graphs, given in [1] is:
\[
C(P_n)  = 2n - 4 + 2^ {2-n}.
\]
Replacing in the last formula $n$ with $2k$ and $2k+1$,
and subtracting it from the upper two formulae, we prove the theorem.
\end{proof}

\section{Residual closeness of $H_{2p,n}$}

We will consider all cases where $p>1$.  In graph $H_{2p,n}$ vertex $1$ is connected to vertices
$2$,...,$p+1$ as well as to $n$,...,$n-p+1$. Because of the automorphism between any two vertices of the graph we will consider only deleting links starting from vertex $1$.

\medskip
By deleting link $(1,2)$, distance $d(1,2)$ is changed from $1$ to $2$. The new distance is $d'(1,2)=d(1,3)+d(3,2)=2$. The same is the change of the distances  (from $1$ to $2$) when deleting links $(1,3)$,...,$(1,p+1)$, because
$d'(1,j)=d(1,2)+d(2,j)=2$. No other distances are changed when
$n \le 4p$. Every change of a distance should be counted twice,
e.g. for distance $d(1,2)$ and for distance $d(2,1)$.
In this case the difference $\Delta_1$ between the
 closeness and the residual closeness is
 $\Delta_1 = 2 \cdot 2^{-1} - 2 \cdot 2^{-2}=0.5$ and:
\[
R(H_{2p,n}) =  C(H_{2p,n}) - 0.5,  \quad   n \le 4p.
\]

Deleting links $(1,2)$,...,$(1,p)$ cannot result in any changes between different vertices.
For example, if $i,...1,s,t,...j$ is a path with the shortest distance between vertices $i$ and $j$, where $s \in [2,p]$, then the same distance is given by path $i,...1,s+1,t,...j$.

\medskip
When $n = 4p +1$ deleting link $(1,p+1)$ will change,
in addition to distance $d(1,p+1)$,  also distance $d(1,2p+1)$ from 2 to 3.  The same will be the change for distance $d(n-p+1,p+1)$.
Deleting any other link will not have bigger change in closenesses.
The new difference is $\Delta_2 = \Delta_1 + 2(2 \cdot 2^{-2} - 2 \cdot 2^{-3})= 1$.
The same ($\Delta_2$) is the difference  when $n = 4p +2,...,6p$.

When $n = 6p +1$ deleting link $(1,p+1)$ will change additionally distance
$d(1,3p+1)$ from 3 to 4.  The same will be the change for 2 other  distances:
$d(n-p+1,2p+1)$ and $d(n-2p+1,p+1)$.
The new difference $\Delta_3 = \Delta_2 + 3(2 \cdot 2^{-3} - 2 \cdot 2^{-4})= \Delta_2 + 3 \cdot 2^{-3} = 1.375$.
The same ($\Delta_3$) is the difference when $n = 6p +2,...,8p$.
Using the floor function  $c=\lfloor \frac{a}{b} \rfloor$, where $c$ is
the integer part of the division of $a$ by $b$, we can prove:

\begin{theorem}
The residual closeness of Harary graph $H_{2p,n}$ is:
\[
R(H_{2p,n}) =  C(H_{2p,n}) - 2 + (k+2)2^{-k},
\]
where $k = \lfloor \frac{n-1}{2p} \rfloor$ and $p>1$.
\end{theorem}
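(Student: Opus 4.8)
The plan is to use the vertex-transitivity of $H_{2p,n}$ noted in the introduction to reduce to deleting an edge incident to vertex $1$, and then to show that among the candidate edges $(1,2),\dots,(1,p+1)$ the removal that decreases the closeness most is the longest chord $(1,p+1)$, so that the minimum defining $R$ is attained there. First I would record the distance formula $d(i,j)=\lceil m/p\rceil$, where $m=\min(|i-j|,\,n-|i-j|)$ is the circular distance, which holds because every edge advances the circular position by at most $p$ and a step of the full amount $p$ is always available. From this, deleting a non-maximal edge $(1,s)$ with $s\le p$ raises only the single distance $d(1,s)$ from $1$ to $2$: any shortest path using $(1,s)$ reroutes through $(1,s+1)$ at no cost, as already observed in the text, so the contribution is the fixed amount $2(2^{-1}-2^{-2})=0.5$. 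Hence I would concentrate on the edge $(1,p+1)$.

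Second, I would identify precisely which distances change when $(1,p+1)$ is removed. The claim is that the affected pairs fall into levels $j=1,\dots,k$, with exactly $j$ pairs at level $j$, namely $\{\,n-mp+1,\ (j-m)p+1\,\}$ for $m=0,1,\dots,j-1$, each distance rising from $j$ to $j+1$. The reason is a ``full progress'' argument: a pair at circular distance exactly $jp$ can be joined by a path of length $j$ only if every step advances by the maximal $p$, which forces the path through vertex $1$ and through the step $1\to p+1$, i.e. the deleted edge. Conversely, pairs whose circular distance is not divisible by $p$ have slack in where to place a short step, so a shortest path avoiding $(1,p+1)$ exists and their distance is unchanged. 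Counting how many forced length-$j$ paths contain the segment $[1,p+1]$ yields exactly the $j$ pairs above.

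The main obstacle is pinning down the cutoff $k$ and verifying that each affected distance increases by exactly one, not more and not zero. After removing $(1,p+1)$ a level-$j$ pair can still be joined on the same side by a detour of length $j+1$, so its new distance is $\min\{\,j+1,\ \lceil (n-jp)/p\rceil\,\}$, the second term being the length of the route around the other way. A short computation shows $\lceil (n-jp)/p\rceil\ge j+1 \iff 2jp<n \iff j\le\lfloor\frac{n-1}{2p}\rfloor=k$, so for $j\le k$ the distance genuinely jumps to $j+1$, while for $j=k+1$ the opposite route is no longer than the original and nothing changes. This is exactly what makes $k$ the correct ceiling, and it is the delicate point where the boundary cases must be checked. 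Since the resulting total drop is at least $0.5$, with equality only when $k=1$, deleting $(1,p+1)$ always realizes the minimum $R$.

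Finally, I would assemble the difference. Each level-$j$ pair, counted in both orders, contributes $2(2^{-j}-2^{-j-1})=2^{-j}$, and there are $j$ of them, so
\[
C(H_{2p,n})-R(H_{2p,n})=\sum_{j=1}^{k} j\,2^{-j}=2-(k+2)2^{-k},
\]
by evaluating the standard sum $\sum_{j=1}^k j x^j$ at $x=\tfrac12$. Rearranging gives $R(H_{2p,n})=C(H_{2p,n})-2+(k+2)2^{-k}$, as claimed.
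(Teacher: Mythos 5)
Your proposal is correct and follows essentially the same route as the paper: reduce by vertex-transitivity to deleting $(1,p+1)$, show that exactly $j$ pairs have their distance raised from $j$ to $j+1$ for $j=1,\dots,k$, and evaluate $\sum_{j=1}^{k} j\,2^{-j}=2-(k+2)2^{-k}$. The only difference is that you supply a general justification (the distance formula $d(i,j)=\lceil m/p\rceil$, the full-progress argument, and the cutoff computation $2jp<n\iff j\le k$) for the pattern that the paper establishes explicitly only for $n\le 4p$, $n\le 6p$, $n\le 8p$ and then extrapolates.
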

\begin{proof}
In general:
\[
\Delta_k = \Delta_{k-1} + k2^{-k}=2^{-1}+2 \cdot 2^{-2}+3 \cdot 2^{-3}+...+ k2^{-k}.
\]
in Appendix A is proven formula(1):
\begin{equation}
\label{eq1}
3 \cdot 2^{-2}+...+k2^{1-k} = 2 -(k+2)2^{1-k}.
\end{equation}
Dividing formula (1) by 2 and adding 1 we receive: $\Delta_k = 2 - (k+2)2^{-k}$,
which proves the theorem.
\end{proof}

\section{Residual closeness of $H_{3,2n}$}

There is automorphism
between any two vertices of graph $H_{3,2n}$ - instead of deleting link $(i,i+1)$
we will delete link $(1,2)$;  instead of deleting link $(i,i+n)$
we will delete link $(1,n+1)$;

\medskip
$H_{3,4}$ is a complete graph and
$d'(1,2)= d(1,3) + d(3,2) = 2$. No other distances are changed.
We have $\Delta_2 = 2 \cdot 2^{-1} - 2 \cdot 2^{-2}= 0.5$ and:
\[
R(H_{3,4})=C(H_{3,4}) - 0.5.
\]
For $n \ge 3$ we have to consider 2 cases.

\medskip
\textbf {Case 1 - Deleting link $(1,n+1)$:}

 \noindent Distance $d(1,n+1)$ is changed from 1 to 3:
\[
d'(1,n+1)=d(1,2)+d(2,n+2)+d(n+2,n+1) = 3.
\]
This is the only changed distance. For example: $d(1,n+2)= d(1,n+1)+ d(n+1,n+2)$ and $d’(1,n+2)= d(1,2)+ d(2,n+2)$.
The difference in closenesses is: $\Delta = 2(2^{-1} -  2^{-3})= 0.75$.

\medskip
\textbf {Case 2 -  Deleting link $(1,2)$:}\smallskip

 \noindent \quad \quad \quad \textbf {A)} Distance $d(1,2)$ is changed from 1 to 3:
\[
d'(1,2)=d(1,n+1)+d(n+1,n+2)+d(n+2,2) = 3.
\]
This is true when $n \ge 3$.
When $n =3$ this is the only changed distance, hence:
\[
\Delta_3 = 2 \cdot 2^{-1} -  2 \cdot 2^{-3}= 0.75,
\]
\[
R(H_{3,6})=C(H_{3,6}) - \Delta_3  = C(H_{3,6}) - 0.75.
\]

 \noindent \quad \quad \quad \textbf {B)} When $n =4$ two more distances are changed from 2 to 3. Distance: $d'(1,3)=d(1,5)+d(5,4)+d(4,3)=3$. The same is the situation with distance $d'(2,8)$, hence:
\[
\Delta_4=\Delta_3 + 2(2 \cdot 2^{-2} -  2 \cdot 2^{-3})= 1.25.
\]
\[
R(H_{3,8})=C(H_{3,8}) - \Delta_4 = C(H_{3,8}) - 1.25.
\]

 \noindent \quad \quad \quad \textbf {C)} When $n \ge 5$,
 distance $d(1,3)$ is changed from 2 to 4:
\[
d'(1,3)=d(1,n+1)+d(n+1,n+2)+d(n+2,n+3)+d(n+3,3) = 4.
\]
The same is situation with distance $d(2,2n)$.
When $n=5$ these are the only changed distances and:
\[
\Delta_5=\Delta_3 + 2(2 \cdot 2^{-2} -  2 \cdot 2^{-4})= 1.5,
\]
\[
R(H_{3,10})=C(H_{3,10}) - \Delta_5= C(H_{3,10}) - 1.5.
\]

 \noindent \quad \quad \quad \textbf {D)} In general, when $n =2k$ distance $d(1,k+1)$ is changed from $k$ to $k+1$:
\[
d'(1,k+1)=d(1,n+1)+d(n+1,2k)+...+d(k+2,k+1)=k+1,
\]
or the closeness is changed with $\Delta = 2 \cdot 2^{-k} -  2 \cdot 2^{-k-1}= 2^{-k}$.
The same is true for other $k-1$ distances:
$d(2n,k)$,$d(2n-1,k-1)$,...,$d(2n-k+2,2)$.
The difference in closenesses is:
\[
 \Delta_{2k} =\Delta_{2k-1} + k2^{-k}.
\]
The residual closeness is:
\[
R(H_{3,4k})=C(H_{3,4k}) - \Delta_{2k}= C(H_{3,4k}) - \Delta_{2k-1} - k2^{-k}.
\]

 \noindent \quad \quad \quad \textbf {E)} Distance $d(1,k+1)$, when $n \ge 2k+1$, is changed from $k$ to $k+2$:
\[
d'(1,k+1)=d(1,n+1)+d(n+1,n+2)+d(n+2,2)+... +d(k,k+1) = k+2
\]
The closeness is changed with $2 \cdot 2^{-k} -  2 \cdot 2^{-k-2}= 3 \cdot 2^{-k-1}$.
The same is the situation with the other $k-1$ distances:
$d(2n,k)$, $d(2n-1,k-1)$,...,$d(2n-k+2,2)$.

\medskip
The difference  and the residual closeness are:
\[
 \Delta_{2k+1} =\Delta_{2k-1} + 3k2^{-k-1}.
\]
\[
R(H_{3,4k+2})=C(H_{3,4k+2}) - \Delta_{2k-1} - 3k2^{-k-1}.
\]

We can prove now:
\begin{theorem}
The residual closeness of Harary graph $H_{3,2n}$ is:
\[
R(H_{3,4k}) =  C(H_{3,4k}) - 3 + (2k+3)2^{-k},
\]
\[
R(H_{3,4k+2}) =  C(H_{3,4k+2}) - 3 + 3(k+2)2^{-1-k}.
\]
\end{theorem}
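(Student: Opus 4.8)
The plan is to read the theorem off the recursions already established in Cases (D) and (E), the only remaining work being to solve them in closed form. First I would record the reduction that the preceding case analysis provides: by the vertex-transitivity of $H_{3,2n}$ every edge is equivalent either to the short edge $(1,2)$ (Case 2) or to the long edge $(1,n+1)$ (Case 1), and since $R(G)=C(G)-\max_{r,s}\Delta_{r,s}$ it suffices to compare the two resulting differences. Deleting the long edge gives the constant $\Delta=0.75$, whereas deleting the short edge gives $\Delta_{2k}$ or $\Delta_{2k+1}$; since for every $n\ge 3$ these Case-2 differences are at least $\tfrac34$, the short edge realizes the minimum, so $R(H_{3,2n})=C(H_{3,2n})-\Delta$ with $\Delta$ the Case-2 difference.

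Next I would observe that both recursions feed back into the odd-indexed differences: Case (E) gives the self-contained recurrence $\Delta_{2k+1}=\Delta_{2k-1}+3k\,2^{-k-1}$, while Case (D) expresses $\Delta_{2k}$ directly in terms of $\Delta_{2k-1}$. So the natural order is to solve the odd sequence first. Telescoping from the base value $\Delta_3=\tfrac34$ yields
\[
\Delta_{2k+1}=\frac34+\frac32\sum_{j=2}^{k} j\,2^{-j}.
\]
Evaluating the partial sum via formula~(\ref{eq1}) --- equivalently $\sum_{j=1}^{k} j\,2^{-j}=2-(k+2)2^{-k}$, so that $\sum_{j=2}^{k} j\,2^{-j}=\tfrac32-(k+2)2^{-k}$ --- gives $\Delta_{2k+1}=3-3(k+2)2^{-1-k}$, which is exactly the second formula of the theorem.

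For the even case I would substitute the closed form just obtained, evaluated at index $k-1$, namely $\Delta_{2k-1}=3-3(k+1)2^{-k}$, into the Case-(D) relation $\Delta_{2k}=\Delta_{2k-1}+k\,2^{-k}$. This collapses to
\[
\Delta_{2k}=3-\bigl(3(k+1)-k\bigr)2^{-k}=3-(2k+3)2^{-k},
\]
the first formula. Applying $R=C-\Delta$ in both parities then completes the argument.

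I do not expect a genuine obstacle here: once the recursions of Cases (D) and (E) are in hand the argument is a routine telescoping computation. The only points requiring care are bookkeeping --- keeping the index shift straight (both recursions reference the odd term $\Delta_{2k-1}$, which forces the odd-before-even order above), fixing the correct base value $\Delta_3=\tfrac34$, and peeling the two missing terms off formula~(\ref{eq1}) to isolate $\sum_{j=2}^{k}$. I would guard against sign and off-by-one slips by checking the closed forms against the already-computed values $\Delta_3=0.75$, $\Delta_4=1.25$ and $\Delta_5=1.5$.
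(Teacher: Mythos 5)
Your proposal is correct and follows essentially the same route as the paper: telescope the Case-(E) recurrence for the odd-indexed differences, evaluate the resulting sum via formula~(\ref{eq1}), obtain $\Delta_{2k+1}=3-3(k+2)2^{-1-k}$, and then feed $\Delta_{2k-1}$ into the Case-(D) relation to get $\Delta_{2k}=3-(2k+3)2^{-k}$. The only differences are cosmetic bookkeeping (you anchor the telescope at $\Delta_3$ where the paper anchors at $\Delta_5$ and rescales formula~(\ref{eq1}) by $\tfrac34$), plus your welcome explicit remark that Case 2 dominates Case 1, which the paper leaves implicit.
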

\begin{proof}
From:\vspace*{-2mm}
\[
 \Delta_{2k+1} = \Delta_{2k-1} + 3k2^{-k-1}
\]
we receive:
\[
 \Delta_{2k+1} = \Delta_{5} + 3 \cdot 3 \cdot 2^{-4}+...+ 3k2^{-k-1}.
\]
\eject

\noindent Multiplying formula (1) by $\frac {3}{4}$ gives:
\begin{eqnarray*}
& 3(3 \cdot 2^{-4}+...+k2^{-1-k}) = 3 \left( 2^{-1} -(k+2)2^{-1-k} \right). &
\\[2pt]
&  \Delta_{2k+1} = \Delta_{5} + 3 \left( 2^{-1} -(k+2)2^{-1-k}\right). &
\\[2pt]
&  \Delta_{2k+1} = 3 -3(k+2)2^{-1-k}. &
\end{eqnarray*}
From $\Delta_{2k} = \Delta_{2k-1} + k2^{-k}$ we receive:
\[
 \Delta_{2k} = 3 -3(k+1)2^{-k} + k2^{-k} = 3 -(2k+3)2^{-k},
\]
which are exactly the formulae for the residual closeness of $H_{3,2n}$.
\end{proof}

\section{Residual closeness of $H_{5,2n}$}

Graph $H_{5,6}$ is a complete graph and deleting
any link will result in change of the distance from $1$ to $2$:
 $\Delta_1 =0.5$ and $R(H_{5,6}) =  C(H_{5,6}) - 0.5$.
Graph $H_{5,8}$ has also plenty of links and by deleting
any link, only one distance is changed  from $1$ to $2$:
$R(H_{5,8}) =  C(H_{5,8}) - \Delta_1 =  C(H_{5,8}) - 0.5$.

\medskip
For the bigger graphs we have to consider 3 cases:

\medskip
\textbf {Case 1 - Deleting link $(1,2)$:}

\noindent  Distance $d(1,2)$ is always changed from $1$ to $2$:
$d'(1,2) = d(1,3) +d(3,2)$.  No other distances are changed.

\medskip
\textbf {Case 2 - Deleting link $(1,n+1)$:}

\noindent  When $n>4$ distance $d(1,n+1)$ is changed from $1$ to $3$:
\[
d'(1,n+1) = d(1,2) +d(2,n+2) + d(n+2,n+1),
\]
 or the change is
$\Delta  = 2 \cdot 2^{-1} - 2 \cdot 2^{-3} = 0.75$.
 No other distances are changed.

\medskip
\textbf {Case 3 - Deleting link $(1,3)$:}

\noindent  By deleting link $(1,3)$,  distance $d(1,3)$ is changed from $1$ to $2$ and this is the only changed distance when  $n \le 6$.
Hence we receive $\Delta = 0.75$ and :
\[
R(H_{5,10}) =  C(H_{5,10}) - 0.75,  \quad  R(H_{5,12}) =  C(H_{5,12}) - 0.75.
\]
\noindent  When $n=7$, other distances start changing. Not only $d(1,3)$ is changed
from 1 to 2, but also $d(1,5)$ and $d(3,13)$ are changed from 2 to 3.
The residual closeness is:
\[
R(H_{5,14})=C(H_{5,14}) - 0.5 - 2(2 \cdot 2^{-2} -  2 \cdot 2^{-3})= C(H_{5,14}) - 1.
\]
The difference between the closeness and the residual closeness,
when $n=8,9,10$, is also $\Delta_2 =1.0$.
Now we can prove:

\begin{theorem}
The residual closeness of Harary graph $H_{5,2n}$ is:
\[
R(H_{5,2n}) =  C(H_{5,2n}) - 2 + (k+2)2^{-k},
\]
where $k = \lfloor \frac{n+1}{4} \rfloor$ and $n \ge 7$.
\end{theorem}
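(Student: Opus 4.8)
The plan is to find which single link deletion drops the closeness the most, and then to evaluate that drop as a function of $n$. The three candidate deletions have already been separated into Cases 1--3. Deleting $(1,2)$ in Case 1 always changes only $d(1,2)$ from $1$ to $2$, a fixed decrease of $0.5$, and deleting the antipodal link $(1,n+1)$ in Case 2 always changes only $d(1,n+1)$ from $1$ to $3$, a fixed decrease of $0.75$; neither depends on $n$. By contrast, deleting the outer forward link $(1,3)$ in Case 3 produces a decrease that grows with $n$, already reaching $1.0$ at $n=7$. So my first step is to confirm that for every $n\ge 7$ the Case 3 decrease exceeds $0.75$, which reduces the whole problem to analysing Case 3 and writing $R(H_{5,2n}) = C(H_{5,2n}) - \Delta$ with $\Delta$ equal to the Case 3 decrease.

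The second step is to track, for the deletion of $(1,3)$, exactly which distances increase and by how much as $n$ grows. After removing $(1,3)$ every shortest path that stepped directly $1\to 3$ must be rerouted, and the effect propagates outward along the chain of $+2$ steps, tempered by the antipodal jumps $i\to i+n$. The claim to establish is that these changes arrive in blocks governed by $k=\lfloor (n+1)/4\rfloor$: each time $k$ increases by one, exactly $k$ new ordered pairs of vertices have their distance raised by a single unit, at distance level $k$. Concretely I would verify the base pattern already recorded in the excerpt — $d(1,3)$ going $1\to 2$, together with $d(1,5)$ and $d(3,2n-1)$ going $2\to 3$ at $n=7$ — and then show, using the symmetry of the construction, that the pattern repeats, each block contributing $k\cdot 2^{-k}$ to the decrease, the factor $2^{-k}=2\cdot2^{-k}-2\cdot2^{-k-1}$ absorbing the doubled count of each ordered pair.

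Once the block structure is in place, the recursion is identical to the one used for $H_{2p,n}$, namely
\[
\Delta_k = \Delta_{k-1} + k2^{-k} = 2^{-1} + 2\cdot 2^{-2} + 3\cdot 2^{-3} + \dots + k2^{-k}.
\]
Dividing formula (1) by $2$ and adding $1$ then yields $\Delta_k = 2 - (k+2)2^{-k}$, and substituting into $R = C - \Delta_k$ gives the stated formula with $k=\lfloor (n+1)/4\rfloor$.

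The main obstacle is the geometric bookkeeping of the second step. Cases 1 and 2 are immediate, and the algebra of the third step is already carried out in the theorem for $H_{2p,n}$; the real work is proving that in Case 3 no distance ever jumps by more than one per block, that exactly $k$ ordered pairs jump at block $k$, and — crucially — that the correct threshold is $\lfloor (n+1)/4\rfloor$ rather than the $\lfloor (2n-1)/4\rfloor$ one might naively expect by treating $H_{5,2n}$ as an even-degree graph on $2n$ vertices. Pinning down precisely where the antipodal edges shorten the rerouted paths, and hence how they widen each block (in terms of total vertex count) relative to the purely even case, is where the argument has to be made carefully.
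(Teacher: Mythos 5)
Your proposal follows essentially the same route as the paper: reduce the problem to the deletion of link $(1,3)$ (Case 3), observe that the changed distances arrive in blocks of $k$ pairs each contributing $2(2^{-k}-2^{-k-1})=2^{-k}$ so that $\Delta_k=\Delta_{k-1}+k2^{-k}$ with $\Delta_2=1$, and then close the telescoped sum via formula (1) to get $\Delta_k=2-(k+2)2^{-k}$. The combinatorial bookkeeping you defer is exactly what the paper supplies by explicitly listing the newly changed distances $d(1,1+2k), d(2n-1,2k-1),\dots,d(3,2n-2k+3)$ at the threshold $n=4k-1$, so the two arguments coincide in both structure and level of detail.
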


\begin{proof}
When $n=4k-1$, not only the previous distances are changed,
but new $k$ distances are changed from $k$ to $k+1$:
$d(1,1+2k)$, $d(2n-1,2k-1)$,..., $d(3,2n-2k+3)$.
The difference between the closeness and the residual closeness $\Delta_{k}$ is:
\[
\Delta_{k}=\Delta_{k-1} + k(2 \cdot 2^{-k} -  2 \cdot 2^{-k-1})=\Delta_{k-1} + k2^{-k}.
\]
\[
\Delta_{k}=\Delta_2 + 3 \cdot 2^{-3}+...+ k2^{-k}.
\]
The residual closeness is:
\[
R(H_{5,8k-2})=C(H_{5,8k-2}) - \Delta_{k}.
\]
The difference in closenesses $ \Delta_{k}$ is the same for $n=4k,4k+1,4k+2$.
Dividing formula (1) by 2 we receive:
\[
3 \cdot 2^{-3}+...+k2^{-k} = 1 - (k+2)2^{-k}.
\]
For the difference $\Delta_{k}$ we receive:
\[
\Delta_{k}=1 + 1 - (k+2)2^{-k} = 2 - (k+2)2^{-k},
\]
which proves the theorem.
\end{proof}

\section{Residual closeness of $H_{2p+1,2n}$}

We will follow the previous section.
When $n \in [p+1, 2p]$, by deleting any link, the distance is changed from $1$ to $2$: $\Delta_1=0.5$.

When $n \in [2p+1, 3p]$, by deleting link $(1,n+1)$,  distance $d(1,n+1)$ is changed from $1$ to $3$ and
$\Delta = 0.75$. This is the biggest decrement for $n$ in this range.

When $n > 3p$, by deleting link $(1,p+1)$,  distance
$d(1,p+1)$ is changed from 1 to 2. Also $d(1,2p+1)$ and $d(p+1,2n-p+1)$ are changed from 2 to 3. No other distances are changed when  $n \in [3p+1, 5p]$
and the decrement is: $\Delta_2 = 1$.

\medskip
In general, when $n=(2k-1)p+1$ and $k \ge2$, by deleting link $(1,p+1)$, not only the previous distances are changed, but new $k$ distances ($d(1,1+p.k)$,..., $d(p+1,2n-p(k-1)+1)$)
are changed from $k$ to $k+1$.
The differences is:
\[
\Delta_k = \Delta_{k-1} + k2^{-k}.
\]
Similarly to Theorem 4 we can prove:

\begin{theorem}
The residual closeness of Harary graph $H_{2p+1,2n}$ is:
\[
R(H_{2p+1,2n}) =  C(H_{2p+1,2n}) - 2 + (k+2)2^{-k},
\]
where $k = \lfloor \frac{n+p-1}{2p} \rfloor$, $p>1$,  and $n \ge 3p+1$.
\end{theorem}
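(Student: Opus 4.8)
The plan is to follow the template of Theorem 4 (the case $p=2$), treating the deletion of the boundary chord $(1,p+1)$ as the decisive one and reducing everything to the recurrence $\Delta_k=\Delta_{k-1}+k2^{-k}$ already solved there. First I would confirm that, once $n\ge 3p+1$, the minimizing link is indeed $(1,p+1)$. The candidates incident to vertex $1$ are the short chords $(1,2),\dots,(1,p)$, the boundary chord $(1,p+1)$, and the diameter link $(1,n+1)$. As observed earlier (exactly as in the remark following Theorem 2), deleting a short chord $(1,j)$ with $j\le p$ changes no inter-vertex distance, since any shortest path through $(1,j)$ reroutes through the parallel chord $(1,j+1)$; these contribute nothing beyond $\Delta_1$. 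The diameter link contributes a fixed amount (distance $d(1,n+1)$ jumping from $1$ to $3$), which for large $n$ is dominated. Only $(1,p+1)$ forces rerouting around the forward reach of vertex $1$, so it is the costliest removal.

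Second---and this is the main obstacle---I would carry out the distance count for the deletion of $(1,p+1)$. The key structural fact is that in $H_{2p+1,2n}$ vertex $1$ reaches vertex $1+jp$ in exactly $j$ hops, each of length $p$ along the chords, and that removing $(1,p+1)$ adds exactly one hop to each shortest path that used it. I would verify the claim of the text that, when $n=(2k-1)p+1$, precisely $k$ pairs have their distance increased from $k$ to $k+1$, namely $d(1,1+kp)$ together with the $k-1$ symmetric pairs $d(p+1,2n-(k-1)p+1),\dots$, while all shorter distances were already accounted for in $\Delta_{k-1}$. Each such change, counted twice, contributes $2\bigl(2^{-k}-2^{-(k+1)}\bigr)=2^{-k}$, so the $k$ pairs give the increment $k2^{-k}$ and establish $\Delta_k=\Delta_{k-1}+k2^{-k}$ with base value $\Delta_2=1$. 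The delicate part is the enumeration of exactly these pairs and the confirmation that no further distances change anywhere on the block $n\in[(2k-1)p+1,(2k+1)p]$.

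Third, the algebra is routine and mirrors Theorem 4. Telescoping the recurrence gives
\[
\Delta_k=2^{-1}+2\cdot 2^{-2}+3\cdot 2^{-3}+\dots+k2^{-k}.
\]
Dividing formula (1) by $2$ yields $3\cdot 2^{-3}+\dots+k2^{-k}=1-(k+2)2^{-k}$, and since $2^{-1}+2\cdot 2^{-2}=1$ we obtain $\Delta_k=2-(k+2)2^{-k}$, whence $R(H_{2p+1,2n})=C(H_{2p+1,2n})-2+(k+2)2^{-k}$.

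Finally I would pin down the index. The value $\Delta_k$ is constant across the block $n\in[(2k-1)p+1,(2k+1)p]$; substituting the endpoints shows $\frac{n+p-1}{2p}$ equals $k$ at the left end and equals $(k+1)-\frac{1}{2p}$ at the right, so $\lfloor\frac{n+p-1}{2p}\rfloor=k$ throughout the block, which is exactly the stated index and completes the identification.
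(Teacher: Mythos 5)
Your proposal follows essentially the same route as the paper: identify $(1,p+1)$ as the critical link, establish the recurrence $\Delta_k=\Delta_{k-1}+k2^{-k}$ with base $\Delta_2=1$, telescope, and close with formula (1) divided by $2$; your added checks (that the short chords and the diameter link are dominated, and that $\lfloor\frac{n+p-1}{2p}\rfloor=k$ throughout each block) are consistent with what the paper asserts in the discussion preceding the theorem. The enumeration of the $k$ newly changed distances that you flag as the delicate step is likewise only asserted, not fully verified, in the paper itself, so your argument matches the published proof in both substance and level of detail.
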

\begin{proof}
The difference $\Delta_{k}$ is:
\[
\Delta_{k}=\Delta_2 + 3 \cdot 2^{-3}+...+ k2^{-k}.
\]
Dividing formula (1) by 2 we receive:
\[
3 \cdot 2^{-3}+...+k2^{-k} = 1 - (k+2)2^{-k}.
\]
Using $\Delta_2 = 1$, we receive:
\[
C(H_{2p+1,2n}) - R(H_{2p+1,2n}) = \Delta_{k}= 2 - (k+2)2^{-k},
\]
which proves the theorem.
\end{proof}

\section{Residual closeness of $H_{3,2n+1}$}

All vertices are connected to 3 other vertices,
 only vertex $n+1$ is connected to 4 vertices:
$1$, $n$, $n+2$, and $2n+1$.

Deleting any vertex of graph $H_{3,5}$  changes only this distance from $1$ to $2$
and the difference between the closeness and the residual closeness is
$\Delta_2= 0.5$.

When $n > 2$ we have to consider 4 cases.

\medskip
\textbf {Case 1 -  Deleting link $(1,n+1)$:}

 \noindent Distance $d(1,n+1)$ is changed from $1$ to $2$:
\[
d'(1,n+1) = d(1,2n+1) +d(2n+1,n+1) =2.
\]
When $n = 3$, this is the only changed distance and
the difference in closenesses is 0.5.
When $n > 3$, deleting link $(1,n+1)$ does not supply the residual closeness.

\medskip
\textbf {Case 2 -  Deleting link $(2,n+2)$:}

 \noindent When $n \ge 3$, distance $d(2,n+2)$ is changed from $1$ to $3$:
\[
d'(2,n+2) = d(2,1) + d(1,n+1) +d(n+1,n+2) =3.
\]
This is the only changed distance and
 the difference in closenesses is 0.75.

\medskip
\textbf {Case 3 -  Deleting link $(1,2)$:}

 \noindent When $n \ge 3$, distance $d(1,2)$ is changed from $1$ to $3$:
\[
d'(1,2) = d(1,n+1) +d(n+1,n+2) + d(n+2,2)=3.
\]
Distance $d(2,2n+1)$ is changed from $2$ to $3$ when $n = 3$:
$d'(2,2n+1) = d(2,n+2) +d(n+2,\linebreak n+1) + d(n+1,2n+1)=3$. The residual closeness,
when $n = 3$, is:
\[
\Delta_3=  2(2^{-1}-2^{-3}) + 2(2^{-2}-2^{-3}) = 1.
\]
\[
R(H_{3,7}) =  C(H_{3,7}) - \Delta_3 = C(H_{3,7}) -1.
\]
The only cases when deleting link $(1,2)$ supplies the
residual closeness are $n = 2, 3$.

\medskip
\textbf {Case 4 -  Deleting link $(n,n+1)$:}\smallskip

 \noindent \quad \quad \quad \textbf {A)} Distance $d(n,n+1)$ is changed from $1$ to $3$:
\[
d'(n,n+1) = d(n,2n) +d(2n,2n+1) + d(2n+1,n+1)=3.
\]
When $n=3$ this is the only changed distance.
The difference  is less than the difference
in case 3: $2(2^{-1}-2^{-3})=0.75 < \Delta_3 $.\smallskip

 \noindent \quad \quad \quad \textbf {B)} When $n \ge 4$ distance $d(1,n)$ is changed from $2$ to $3$ :
\[
d'(1,n) = d(1,2n+1) +d(2n+1,2n) + d(2n,n)=3.
\]
When $n=4$  distance $d(n-1,n+1)$ is changed from $2$ to $3$:
\[
d'(3,5) = d(3,7) +d(7,6) + d(6,5)=3.
\]
When $n=4$ distance $d(n,n+2)$ is also changed from $2$ to $3$:
\[
d'(4,6) = d(4,8) +d(8,7) + d(7,6)=3.
\]
These are the only changed distances when $n=4$ and:
\[
\Delta_4 = 2(2^{-1}-2^{-3}) + 3.2(2^{-2}-2^{-3})=1.5.
\]
\[
R(H_{3,9}) =  C(H_{3,9}) - \Delta_4 =  C(H_{3,9}) - 1.5.
\]

 \noindent \quad \quad \quad \textbf {C)} When $n>4$,  two of the changed (from 2 to 3) distances in subcase B have bigger changes (from 2 to 4).
Distance $d(n-1,n+1)$ is changed from $2$ to $4$:
\[
d'(n-1,n+1) = d(n-1,2n-1) +d(2n-1,2n) + d(2n,2n+1)+ d(2n+1,n+1).
\]
Distance $d(n,n+2)$ is also changed  from $2$ to $4$:
\[
d'(n,n+2) = d(n,2n) +d(2n,2n+1) + d(2n+1,n+1)+ d(n+1,n+2).
\]

These are the only changes when $n=5$ and:
\[
\Delta_5 = 2(2^{-1}-2^{-3}) + 2(2^{-2}-2^{-3})+ 2 \cdot 2(2^{-2}-2^{-4})=1.75.
\]
\[
R(H_{3,11}) =  C(H_{3,11}) - \Delta_5 =  C(H_{3,11}) - 1.75.
\]

 \noindent \quad \quad \quad \textbf {D)} In general, when $n \ge 2p$, new $p-1$ distances $d(1,n-p+2)$,$d(2,n-p+3)$,...,$d(p-1,n)$
 are changed from $p$  to $p+1$, e.g. from path $1,n+1,n, n-1,...,n-p+2$
to path $1,2n+1,2n, n, \linebreak n-1,...,n-p+2$.

\medskip
When $n=2p$ another $p$ distances $d(n-p+1,n+1)$,$d(n-p+2,n+2)$,...,$d(n-1,n+p+1)$
 are changed from $p$  to $p+1$ . e.g. from path $n+1,n, n-1,...,n-p+1$
to path $n+1,n+2,...,2n-p+1,n-p+1$. The distance between vertices $n+1 = 2p+1$ and $2n-p+1 = 3p+1$ is equal to $p$.
These are the only new changes when $n=2p$ and:
\[
\Delta_{2p} = \Delta_{2p-1} +(p-1)2(2^{-p}-2^{-p-1})+ p2(2^{-p}-2^{-p-1})=\Delta_{2p-1} + (2p-1) 2^{-p}.
\]
\[
R(H_{3,4p+1}) =  C(H_{3,4p+1}) - \Delta_{2p}.
\]

 \noindent \quad \quad \quad \textbf {E)} When $n>2p$ the $p$ distances $d(n-p+1,n+1)$,$d(n-p+2,n+2)$,...,$d(n-1,n+p+1)$ from subcase D
 are changed from $p$  to $p+2$, e.g. from path $n+1,n, n-1,...,n-p+1$
to path $n+1,2n+1,2n, n, n-1,...,n-p+1$.

\medskip
These are the only new changes when $n=2p+1$ and:
\[
\Delta_{2p+1} = \Delta_{2p} - p2(2^{-p}-2^{-p-1}) + p2(2^{-p}-2^{-p-2})  =\Delta_{2p} + p2^{-p-1}.
\]
\[
R(H_{3,4p+3}) =  C(H_{3,4p+3}) - \Delta_{2p+1}.
\]
Now we can prove:
\begin{theorem}
The residual closeness of Harary graph $H_{3,2n+1}$ is:
\[
R(H_{3,4p+1}) =  C(H_{3,4p+1}) - 4 + (3p+4)2^{-p},
\]
\[
R(H_{3,4p+3}) =  C(H_{3,4p+3}) - 4 + (5p+8)2^{-p-1},
\]
where  $p > 1$.
\end{theorem}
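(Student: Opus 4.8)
The plan is to solve the two recursions recorded just above the statement, namely $\Delta_{2p}=\Delta_{2p-1}+(2p-1)2^{-p}$ from subcase D and $\Delta_{2p+1}=\Delta_{2p}+p2^{-p-1}$ from subcase E, starting from the explicitly computed base values $\Delta_4=\tfrac{3}{2}$ and $\Delta_5=\tfrac{7}{4}$. Since these two relations chain through consecutive indices of both parities, the first step is to eliminate the odd terms: applying subcase E with $p$ replaced by $p-1$ gives $\Delta_{2p-1}=\Delta_{2p-2}+(p-1)2^{-p}$, and substituting this into subcase D yields the purely even recursion
\[
\Delta_{2p}=\Delta_{2p-2}+(3p-2)2^{-p},\qquad p\ge 3,
\]
with base case $\Delta_4=\tfrac{3}{2}$.

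Next I would telescope this recursion to obtain $\Delta_{2p}=\Delta_4+\sum_{j=3}^{p}(3j-2)2^{-j}$ and split the sum as $3\sum_{j=3}^{p}j2^{-j}-2\sum_{j=3}^{p}2^{-j}$. The first sum is handled by formula (1) divided by $2$, namely $\sum_{j=3}^{p}j2^{-j}=1-(p+2)2^{-p}$, while the second is the elementary geometric sum $\sum_{j=3}^{p}2^{-j}=\tfrac{1}{4}-2^{-p}$. Combining these gives $\sum_{j=3}^{p}(3j-2)2^{-j}=\tfrac{5}{2}-(3p+4)2^{-p}$, so that
\[
\Delta_{2p}=\tfrac{3}{2}+\tfrac{5}{2}-(3p+4)2^{-p}=4-(3p+4)2^{-p},
\]
which, via $R(H_{3,4p+1})=C(H_{3,4p+1})-\Delta_{2p}$, is exactly the first formula.

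For the second formula I would simply feed this closed form back into subcase E:
\[
\Delta_{2p+1}=\Delta_{2p}+p2^{-p-1}=4-(6p+8)2^{-p-1}+p2^{-p-1}=4-(5p+8)2^{-p-1},
\]
using $(3p+4)2^{-p}=(6p+8)2^{-p-1}$, and then $R(H_{3,4p+3})=C(H_{3,4p+3})-\Delta_{2p+1}$ finishes the proof.

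The computations themselves are routine; the one point demanding care is the combination of subcases D and E. Because they advance the index by one at each parity, I must verify that the derived even recursion is valid in the stated range ($p\ge 3$) and that the base values $\Delta_4$ and $\Delta_5$ produced in subcases B and C are consistent with it — in particular that subcase E already holds at $p=2$, so that $\Delta_5=\Delta_4+2^{-2}$ agrees with the direct count. A secondary point worth noting is that, unlike the earlier theorems where formula (1) alone sufficed, here the constant term $-2$ in $(3p-2)2^{-p}$ forces an additional geometric summation alongside formula (1).
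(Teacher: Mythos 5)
Your proposal is correct and follows essentially the same route as the paper: both combine subcases D and E into the even-index recursion $\Delta_{2p}=\Delta_{2p-2}+(3p-2)2^{-p}$ (the paper's formula (2)), evaluate the resulting sum via formula (1), anchor the constant with $\Delta_4=1.5$, and then obtain the odd case from $\Delta_{2p+1}=\Delta_{2p}+p2^{-p-1}$. The only difference is bookkeeping: you split $(3j-2)2^{-j}$ into $3\cdot j2^{-j}-2\cdot 2^{-j}$ and add a geometric series, whereas the paper adds two instances of formula (1) (for $k=p$ and $k=p-1$) and then fits the leftover ``linear component'' $L$ from the base value; both yield $\Delta_{2p}=4-(3p+4)2^{-p}$.
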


\begin{proof}

\vspace*{-9mm}
\[
\Delta_{2p} =\Delta_{2p-1} + (2p-1) 2^{-p}=\Delta_{2p-2} + (p-1)2^{-p} + (2p-1) 2^{-p}
\]
\begin{equation}
\label{eq2}
\Delta_{2p} = \Delta_{2p-2} + (3p-2) 2^{-p} .
\end{equation}

Formula (1) for $k=p$, divided by 2, becomes:
\[
3 \cdot 2^{-3}+...+p2^{-p} = 1 - (p+2)2^{-p}.
\]
Formula (1) for $k=p-1$ divided by 2, becomes:
\[
3 \cdot 2^{-3}+...+ 2(p-1)2^{-p} = 1 - (p+1)2^{1-p}.
\]
Adding both equation we receive:
\begin{equation}
\label{eq3}
6 \cdot 2^{-3}+...+(3p-2)2^{-p} = 2 -(3p+4)2^{-p}.
\end{equation}
The first items of the sum for $\Delta_{2p}$ are not added in the formula above.
To determine linear component L (the first items of the sum) we use:
\[
1.5 = \Delta_4 = L+2-10 \cdot 2^{-2}=L - 0.5,
\]
or $L =2$. Then the difference $\Delta_{2p}$ becomes:
\[
\Delta_{2p} = 4 - (3p+4)2^{-p}.
\]
For the next difference $\Delta_{2p+1}$ we receive:
\[
\Delta_{2p+1} =\Delta_{2p} + p2^{-p-1} = 4 - (5p+8)2^{-p-1},
\]
which proves the theorem.
\end{proof}

\section{Residual closeness of $H_{5,2n+1}$}

 \noindent \quad \quad \quad \textbf {A)} Deleting any link  $(i,j)$  of graph $H_{5,7}$ changes distance $d(i,j)$ from $1$ to $2$.
The same is the situation with graph $H_{5,9}$.
Hence:
\[
R(H_{5,2n+1})= C(H_{5,2n+1}) - 0.5, \quad when \quad n=3,4.
\]
 \noindent \quad \quad \quad \textbf {B)} For graph $H_{5,11}$, deleting link  $(2,n+2)$  changes distance $d(2,n+2)$ from $1$ to $3$.
Deleting a link, connecting nodes with closer numbers, like  $(1,2)$ or $(1,3)$, changes the distance from $1$ to $2$. The same change in the distance (from $1$ to $2$) causes deleting link $(1,n+1)$. Hence :
\[
R(H_{5,11})= C(H_{5,11}) - 2(2^{-1}-2^{-3}) = C(H_{5,11})  - 0.75.
\]

 \noindent \quad \quad \quad \textbf {C)} For graph $H_{5,13}$, deleting link  $(1,2n)$  changes  distance $d(1,2n)$ from $1$ to $2$
and  distances $d(1,2n-2)$ and $d(3,2n)$ from $2$ to $3$:
\[
R(H_{5,13})= C(H_{5,13}) - 2(2^{-1}-2^{-2}) - 2.2(2^{-2}-2^{-3}) = C(H_{5,13})  - 1.
\]

 \noindent \quad \quad \quad \textbf {D)} For graph $H_{5,2n+1}$, $n>6$, deleting link  $(n,n+2)$  changes  distance $d(n,n+2)$ from $1$ to $2$:
$d'(n,n+2) = d(n,n+1) +d(n+1,n+2)$. Distance $d(2,n)=d(2,n+2)+d(n+2,n)$ is changed
from $2$ to $3$: $d'(2,n) = d(2,n+2) +d(n+2,n+1)+d(n+1,n)$. The same is for distance  $d(n+2,2n)$.
Distance $d(n-2,n+2)=d(n-2,n)+d(n,n+2)$ is also changed
from $2$ to $3$: $d'(n-2,n+2) = d(n-2,n) +d(n,n+1)+d(n+1,n+2)$. The same is for distance  $d(n,n+4)$.
No other distance is changed when $n=7, 8, 9, 10$ and:
\[
\Delta_2 = 2(2^{-1}-2^{-2}) + 4.2(2^{-2}-2^{-3}) = 1.5,
\]
\[
R(H_{5,2n+1})= C(H_{5,2n+1})  - 1.5, \quad when \quad n=7,8,9,10.
\]

 \noindent \quad \quad \quad \textbf {E)} In general,
when $n=2k+1$, $k \in \{4p-1,4p,4p+1,4p+2 \}$,  deleting link
$(n,n+2)$ of graph $H_{5,2n+1}$,
in addition to the previous changed distances,
$3p-2$ distances are changed from $p$ to $p+1$. The change in closeness
$\Delta_{p}= C(H_{5,2n+1}) - R(H_{5,2n+1})$ is:
\begin{equation}
\label{eq4}
\Delta_{p} = \Delta_{p-1} + (3p-2).2(2^{-p}-2^{-p-1}) = \Delta_{p-1} + (3p-2)2^{-p}.
\end{equation}

We can prove now:
\begin{theorem}
The residual closeness of Harary graph $H_{5,2n+1}$ is:
\[
R(H_{5,2n+1}) =  C(H_{5,2n+1}) - 4 + (3p+4)2^{-p},
\]
where $p = \lfloor \frac{n+1}{4} \rfloor$ and $p > 1$.
\end{theorem}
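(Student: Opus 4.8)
The plan is to mirror exactly the structure of the proofs of Theorems 4, 5, and 6, since the recurrence $(\ref{eq4})$ for $H_{5,2n+1}$, namely $\Delta_{p} = \Delta_{p-1} + (3p-2)2^{-p}$, is identical to the recurrence $(\ref{eq2})$ that governed the even subcase of $H_{3,2n+1}$ in Theorem 6. First I would unfold the recurrence: writing $\Delta_p = \Delta_{p_0} + \sum_{j=p_0+1}^{p}(3j-2)2^{-j}$ for a suitable base index $p_0$, where the base value is supplied by the explicitly computed case in subcase D. From subcase D we have the anchor $\Delta_2 = 1.5$ (for $n=7,8,9,10$, i.e. $p=2$), so the summation starts at $j=3$ and I would write $\Delta_p = 1.5 + \sum_{j=3}^{p}(3j-2)2^{-j}$.

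Next I would evaluate the sum $\sum_{j=3}^{p}(3j-2)2^{-j}$ in closed form using formula $(\ref{eq1})$, exactly as the author does in the proof of Theorem 6 when establishing $(\ref{eq3})$. The key identity there is
\[
6 \cdot 2^{-3}+\dots+(3p-2)2^{-p} = 2 -(3p+4)2^{-p},
\]
obtained by adding formula (1) for $k=p$ to formula (1) for $k=p-1$ (each divided by 2). Since $(3j-2)2^{-j}$ for $j=3$ gives $7\cdot 2^{-3}$ rather than the $6\cdot 2^{-3}$ appearing in $(\ref{eq3})$, I would need to check the precise first term and reconcile the constant: following the author's device in Theorem 6, I would absorb the discrepancy into a ``linear component'' $L$ determined by back-substituting the known value $\Delta_2 = 1.5$, solving $1.5 = L + 2 - (3\cdot 2+4)2^{-2} = L + 2 - 2.5 = L - 0.5$, which again yields $L = 2$. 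Combining, I obtain
\[
\Delta_p = 4 - (3p+4)2^{-p},
\]
and hence $R(H_{5,2n+1}) = C(H_{5,2n+1}) - \Delta_p = C(H_{5,2n+1}) - 4 + (3p+4)2^{-p}$, which is precisely the claimed formula with $p = \lfloor (n+1)/4 \rfloor$.

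The main obstacle I anticipate is not the algebra of the summation — that is routine given formula (1) — but rather justifying the structural claim underlying $(\ref{eq4})$: that when $n=2k+1$ with $k\in\{4p-1,4p,4p+1,4p+2\}$, deleting link $(n,n+2)$ is the \emph{optimal} edge to delete (it minimizes $C(G_{r,s})$), and that exactly $3p-2$ new distances jump from $p$ to $p+1$ at each increment of $p$. Verifying the count $3p-2$ requires tracking which pairs of vertices have their shortest paths forced to detour through the ``long'' chords once the short connections are severed, analogous to subcase D where the four distances $d(2,n)$, $d(n+2,2n)$, $d(n-2,n+2)$, $d(n,n+4)$ were identified. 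I would confirm optimality by comparing against Cases reminiscent of the $H_{5,2n}$ analysis (deleting $(1,2)$, $(1,n+1)$, $(1,3)$) and checking that none produces a larger decrement than $(n,n+2)$; since the even case $H_{5,2n}$ already yielded the template ``$-2+(k+2)2^{-k}$'' from a single governing recurrence, and the odd case here only differs by the extra central vertex of degree $6$, I expect the optimal-link argument to go through with the same comparisons, leaving the closed-form computation as the only remaining step.
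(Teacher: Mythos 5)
Your proposal matches the paper's own proof essentially step for step: the paper likewise observes that recurrence (4) coincides with recurrence (2) from Theorem 6, invokes formula (3), and determines the same linear component $L=2$ from the anchor $1.5=\Delta_2=L+2-10\cdot 2^{-2}$ to conclude $\Delta_p = 4-(3p+4)2^{-p}$. The structural facts you flag as the real obstacle (that deleting link $(n,n+2)$ is optimal and that exactly $3p-2$ new distances change at each step) are likewise only asserted, not proved, in the paper's subcases A--E, so your treatment is no less complete than the original.
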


\begin{proof}
Formula (4) is the same as formula (2) from Theorem 6. Using formula (3) from Theorem 6, we determine linear component $L$:
\[
1.5 =  \Delta_2 = L+2-10 \cdot 2^{-2}=L- 0.5,
\]
or $L =2$. Then the difference $\Delta_{p}$ becomes:
\[
\Delta_{p} = 4 - (3p+4)2^{-p},
\]
which proves the theorem.
\end{proof}

\section{Residual closeness of $H_{2m+1,2n+1}$}

We will consider the cases $m>2$ similar to $H_{5,2n+1}$.
The differences in closenesses of Harary graphs $H_{2m+1,2n+1}$ for the smaller
numbers $n$ are: $\Delta = 0.5$, when  $m+1 \le n \le 2m$;
$\Delta = 0.75$, when  $2m+1 \le n < 3m$; and
$\Delta = 1$, when  $n = 3m$.

\medskip
When  $n = 3m+1$, deleting link  $(n,n+m)$ of graph $H_{2m+1,6m+3}$,
changes distance $d(n,n+m)$ from $1$ to $2$ and $4$ more distances
($d(m,n)$, $d(n+m,2n)$, $d(n-m,n+m)$, and $d(n,n+2m)$) from $2$ to $3$. The difference in closenesses $\Delta_2$ is:
\[
\Delta_2 = 2(2^{-1}-2^{-2})+  4 \cdot 2(2^{-2}-2^{-3}) = 0.5 +1 = 1.5,
\]
\[
C(H_{2m+1,6m+3}) - R(H_{2m+1,6m+3}) =  \Delta_2 = 1.5.
\]

In general, when $n \in \{m(2p-1)+1,m(2p-1)+2,...,m(2p+1) \}$,  deleting link  $(n,n+m)$ of graph $H_{2m+1,2n+1}$,
in addition to the previous changed distances, new
$3p-2$ distances are changed from $p$ to $p+1$. The difference in closenesses
$\Delta_{p}$ is:
\begin{equation}
\label{eq5}
\Delta_{p} = \Delta_{p-1} + (3p-2)2(2^{-p}-2^{-p-1}) = \Delta_{p-1} + (3p-2)2^{-p}.
\end{equation}

We can prove now:
\begin{theorem}
The residual closeness of Harary graph $H_{2m+1,2n+1}$ is:
\[
R(H_{2m+1,2n+1}) =  C(H_{2m+1,2n+1}) - 4 + (3p+4)2^{-p},
\]
where $p = \lfloor \frac{n+m-1}{2m} \rfloor$ and $p > 1$.
\end{theorem}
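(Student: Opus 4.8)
The plan is to solve the recurrence (5) by exactly the argument used for Theorems 6 and 7, since formula (5) coincides verbatim with formula (4) of Theorem 7 (and with formula (2) of Theorem 6): in all three cases the excess satisfies $\Delta_{p} = \Delta_{p-1} + (3p-2)2^{-p}$. The combinatorial discussion preceding the statement has already supplied this recurrence together with its base value $\Delta_2 = 1.5$, attained at $n = 3m+1$ (that is, $p = 2$) by deleting link $(n,n+m)$. Thus only the algebraic resolution remains, and it is a direct replay of Theorem 7.

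First I would unroll formula (5) down to the base case, writing
\[
\Delta_{p} = \Delta_{2} + \sum_{j=3}^{p}(3j-2)2^{-j}.
\]
I would then invoke formula (3) from Theorem 6, which evaluates the decaying part of this sum as $2 - (3p+4)2^{-p}$, and record the solution as the ansatz $\Delta_{p} = L + 2 - (3p+4)2^{-p}$, where the constant $L$ absorbs the base term together with the leading contribution of the summation formula.

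To pin down $L$, I would substitute $p = 2$ and $\Delta_2 = 1.5$:
\[
1.5 = \Delta_2 = L + 2 - 10 \cdot 2^{-2} = L - 0.5,
\]
so $L = 2$. Hence $\Delta_{p} = 4 - (3p+4)2^{-p}$, and because $\Delta_{p} = C(H_{2m+1,2n+1}) - R(H_{2m+1,2n+1})$, rearranging gives
\[
R(H_{2m+1,2n+1}) = C(H_{2m+1,2n+1}) - 4 + (3p+4)2^{-p}
\]
with $p = \lfloor \frac{n+m-1}{2m} \rfloor$, as claimed.

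The only genuinely demanding part is the combinatorial input summarized before the statement: confirming that deleting link $(n,n+m)$ is the removal minimizing the closeness, that the excess stays constant across each block $n \in \{m(2p-1)+1,\dots,m(2p+1)\}$ (which is what fixes the floor expression for $p$), and that exactly $3p-2$ new distances jump from $p$ to $p+1$ at each threshold $n = m(2p-1)+1$. Once that count is granted, formula (5) follows and the remaining derivation is mechanical; I therefore expect no further obstacle in the algebra itself.
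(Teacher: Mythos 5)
Your proposal matches the paper's own proof essentially step for step: both observe that formula (5) coincides with formulae (2) and (4), write the solution in the form $\Delta_p = L + 2 - (3p+4)2^{-p}$ via formula (3), and fix $L = 2$ from the base value $\Delta_2 = 1.5$. No discrepancy to report.
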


\begin{proof}
Formula (5) is the same as formulae (2) and (4).
Similarly to the proof of Theorem 6 we have:
\[
\Delta_p = L+2 - (3p+4)2^{-p},
\]
where $L$ is a linear component, corresponding to the first terms of the sum of $\Delta_p$.
Using $\Delta_2 = 1.5$, we determine $L$:
\[
1.5 =  \Delta_2 = L+2-10 \cdot 2^{-2}=L- 0.5,
\]
or $L =2$. Then the difference $\Delta_{p}$ becomes:
\[
\Delta_{p} = 4 - (3p+4)2^{-p},
\]
which proves the theorem.
\end{proof}

\section{Conclusion}
The link residual closeness is one of the most sensitive indicators for robustness of networks. In this article we consider Harary graphs $H_{k,n}$.
The residual closeness of $H_{2,n}$ (cycle graph) is supplied by path graph $P_n$ (both with known closenesses).
When $k > 2$ we have calculated the difference between the closeness and
the link residual closeness of Harary graphs $H_{k,n}$.

\section{Appendix A. Proof of Formula 1.}

\begin{proof}
We start with:
\[
Y=X + X^2 + X^3+...+X^k.
\]
\[
Y(1-X)=X -X^{k+1},
\]
or:
\[
X + X^2 + X^3+...+X^k = \frac {X -X^{k+1}}{(1-X)}.
\]
Differentiating both sides of equation, we receive:
\[
1 + 2 \cdot X^1 + 3 \cdot X^2+...+kX^{k-1} = \frac {1 -(k+1)X^{k}}{(1-X)} + \frac {X -X^{k+1}}{(1-X)^2}.
\]
Replacing $X$ with $\frac {1}{2}$ we receive:
\[
1 + 2 \cdot 2^{-1}+ 3 \cdot 2^{-2}+...+k2^{1-k} = \frac {1 -(k+1)2^{-k}}{2^{-1}} + \frac {2^{-1} -2^{-k-1}}{2^{-2}}.
\]
\[
1 + 2 \cdot 2^{-1}+ 3 \cdot 2^{-2}+...+k2^{1-k} = 4 -2^{1-k} -(k+1)2^{1-k}.
\]
\[
3 \cdot 2^{-2}+...+k2^{1-k} = 2 -(k+2)2^{1-k},
\]
which is exactly Formula (1).
\end{proof}

\end{document}